\newtheorem*{thm}{Theorem}
\newcommand{\Prob}[1]{\mathbb{P}\!\left( #1 \right)}
\newcommand{\prob}[1]{\mathbb{P}_Q\!\left( #1 \right)}
\newcommand{\probL}[1]{\mathbb{P}_L\!\left( #1 \right)}
\newcommand{\pr}[1]{\mathbb{P}\!\left( #1 \right)}
\newcommand{\expec}[1]{\left<#1\right>}
\newcommand{\ket}[1]{ | #1 \rangle}
\newcommand{\bra}[1]{ \langle #1 |}
\newcommand{\Ac}{\mathcal{A}}
\newcommand{\Hc}{\mathcal{H}}
\newcommand{\Aci}{\mathcal{A}_{\infty}}
\newcommand{\Mc}{\mathcal{M}}
\newcommand{\la}{\lambda}
\def\one{\leavevmode\hbox{\small1\normalsize\kern-.33em1}}
\renewcommand{\revision}[1]{#1}
\begin{document}

\title{A \revision{tight Tsirelson} inequality for infinitely many outcomes}

\author{Stefan Zohren\inst{1,2,3}
\and Paul Reska\inst{4}
\and Richard D. Gill\inst{1,5}
\and Willem Westra\inst{6}}

\shortauthor{S. Zohren \etal}

\institute{
\inst{1} Mathematical Institute, Leiden University, Niels Bohrweg 1, 2333 CA Leiden, The Netherlands \\
\inst{2} Department of Statistics, S\~ao Paulo University, Rua do Mat\~ao 1010, 05508-090, S\~ao Paulo, Brazil\\
\inst{3} Blackett Laboratory, Imperial College London, Prince Consort Road, London SW7 2AZ, UK\\
\inst{4} Institute for Theoretical Physics, Utrecht University,  Leuvenlaan 4, 3584 CE Utrecht, The Netherlands\\
\inst{5} CWI, Science Park 123, 1098 XG Amsterdam, The Netherlands\\
\inst{6} Department of Mathematics, University of Iceland, Dunhaga 3, 107 Reykjavik, Iceland}

\date{December 20, 2009}

\pacs{03.65.Ud}{Entanglement and quantum nonlocality}
\pacs{03.65.-w}{Quantum mechanics}
\pacs{03.67.-a}{Quantum information}

\abstract{
We present a novel tight \revision{bound on the quantum violations of the CGLMP inequality} in the case of infinitely many outcomes. Like in the case of Tsirelson's inequality the proof of our new inequality does not require any assumptions on the dimension of the Hilbert space or kinds of operators involved. However, it is seen that the maximal violation is obtained by the conjectured best measurements and a pure, but not maximally entangled, state. We give an approximate state which, in the limit where the number of outcomes tends to infinity, goes to the optimal state for this setting. This state might be potentially relevant for experimental verifications of Bell inequalities through multi-dimenisonal entangled photon pairs.
}
\maketitle

\section{
Introduction}

Already since the seminal work by Bell in 1964 \cite{Bell}, Bell inequalities and their quantum violations are widely discussed in the literature. Probably one of the most well known examples is the Clauser-Horne-Shimony-Holt (CHSH) inequality \cite{CHSH}. It considers the case of two parties, Alice and Bob, which perform two possible measurements each with outcomes $\pm1$ on a shared quantum state. Any correlations of the experimental results which can be explained through a local realistic theory based on local hidden variables obey the CHSH inequality,
\begin{equation}\label{eq:CHSH}
|\expec{A_2B_2}+\expec{A_1B_2}+\expec{A_1B_1}-\expec{A_2B_1}|\leq 2,
\end{equation}
where $A_1,A_2$ and $B_1,B_2$ refer to the two possible measurements on Alice's and Bob's side respectively and $\expec{\cdot}$ denotes the expectation value.

Quantum correlations \revision{though} can violate this inequality. However, it was shown by Tsirelson \cite{Cir80} that they still obey the following \revision{so-called} \emph{quantum} Bell inequality
\begin{equation}
|\expec{A_2B_2}+\expec{A_1B_2}+\expec{A_1B_1}-\expec{A_2 B_1}|\leq 2 \sqrt{2},
\end{equation}
where the maximal violation is obtained by the maximally entangled state. This \revision{quantum Bell inequality or} Tsirelson inequality is quite remarkable in the sense that it applies to any quantum correlations without making assumptions on the kind of measurements or Hilbert space involved.

In this letter we give a proof of a \revision{tight} quantum Bell inequality analogous to Tsirelson's \revision{original} inequality for the case of infinitely many outcomes, \revision{that is a tight bound on the quantum violations of a generalization of the above Bell inequality for infinitely many outcomes}. In this limit, we show that the maximal violation is obtained by the conjectured best measurements \cite{CGLMP,Zukowski} and that the optimal state is a pure state which is not maximally entangled. This is in agreement with previous numerical investigations \cite{optimal,zohren,Chen}. Further, we give an explicit analytical expression for the state which, in the limit where the number of outcomes tends to infinity, converges to the optimal state for this setting. Numerical investigations show that for large but finite number of outcomes this state can be taken as a good approximation to the optimal state. Finally, we comment on possible experimental implementations. 

\section{
Basic definitions regarding Bell experiments}

Consider the setting described in the previous section of two parties, Alice and Bob, choosing between two possible measurements. However, we now generalize the measurements to have $d$ possible outcomes $x_k$, $k=0,...,d-1$ with $x_k\neq x_l$ for $k\neq l$. We call this the $2\times2\times d$ Bell setting. Expectation values of the kind used above can then be written as
\begin{equation}
\expec{A_aB_b}=\sum_{k,l=0}^{d-1} x_k x_l \Prob{k,l|a,b}\quad a,b=1,2.
\end{equation}
In the case of local realistic theories any correlation between the measurements on Alice's and Bob's side must be explained by a hidden variable $\la$ which implies for the probabilities
\begin{equation}\label{eq:LR}
\probL{k,l|a,b}= \sum_\la p(\la) \pr{k|a,\la} \pr{l|b,\la}.
\end{equation}

In quantum mechanics on the other hand the system is described by a density matrix $\rho$ on a Hilbert space $\Hc\!=\!\Hc_A\otimes\Hc_B$ and quantum mechanical probabilities read 
\begin{equation}\label{eq:QMprob}
\prob{k,l|a,b}= \mathrm{Tr}\left(A_a^k \otimes B_b^l \,\,\rho \right).
\end{equation}
Here $A_a^k$ and $B_b^l$ are \revision{positive} operators on $\Hc_A$ and $\Hc_B$ respectively, satisfying $\sum_{k=0}^{d-1}A_a^k =\one$ for all $a$ and $\sum_{l=0}^{d-1}B_b^l \,=\one$ for all $b$. \\

\section{A Bell inequality for the CGLMP setting}
A Bell type inequality for the $2\times 2 \times d$ setting was first given by Collins-Gisin-Linden-Massar-Popescu (CGLMP) \cite{CGLMP}. Later, based on earlier ideas \cite{optimal}, a generalized and simplified version was found in \cite{zohren} which reads
\begin{eqnarray}\label{eq:inequality}
\!\!\!\probL{A_2<B_2}+\probL{B_2<A_1}+\probL{A_1<B_1}+\nonumber\\
+\,\probL{B_1\leq A_2}\geqslant 1,\!\!\!\!\!
\end{eqnarray}
where
$\probL{A_a<B_b}=\sum_{k<l} \probL{k,l|a,b}$.
Not only the inequality itself became much simpler, but also its proof which reduces the proof of \cite{CGLMP} to a literally three line proof and is therefore worth to quickly recall here. Starting with the following obvious statement
$\{A_2\geq B_2\} \cap \{B_2 \geq A_1\} \cap \{A_1\geq B_1\} \subseteq \{A_2\geq B_1\}$
 and taking the complement one gets $\{A_2<B_1\}\subseteq \{A_2<B_2\} \cup  \{B_2<A_1\}\cup \{A_1<B_1\}$ which implies for the probabilities that
$\probL{A_2<B_1} =   1- \probL{A_2 \geq B_1}
 \leq   \probL{A_2<B_2} +  \probL{B_2<A_1} +\probL{A_1<B_1}
 $.
This completes the proof. A nice feature of inequality \eqref{eq:inequality} is that it reads the same for any number of outcomes. In particular, it is also valid as the number of outcomes becomes infinite. It is also straightforward to generalize the inequality and the proof to a $2\times N\times d$ setting. As a special example we note that for the case of $d\!=\!2$ possible outcomes $\{ x_0,x_1\}\!=\!\{-1,+1$\} inequality \eqref{eq:inequality} directly reproduces the CHSH inequality in its conventional form \eqref{eq:CHSH}.

\section{A quantum Bell inequality for infinitely many outcomes}
The quantum violation of the CGLMP inequality for various numbers of outcomes was investigated in several articles, most importantly \cite{Ac02,Chen,zohren}. In most of them it was assumed that the dimension of the Hilbert space is equal to the number of outcomes, i.e. $\Hc\!=\!\mathbb{C}^d\otimes\mathbb{C}^d$, numerical evidence supporting this assumption was given in \cite{zohren}. Further, for higher numbers of outcomes the analysis was \revision{purely} numerical. From the numerical evidence it was conjectured that the optimal measurements, causing the maximal violation, are given by the following projective measurements \cite{CGLMP,Zukowski} with projectors
\begin{eqnarray}
\ket{k}_{A,a} & = &\frac{1}{\sqrt{d}} \sum_{m=0}^{d-1} \exp\left(i\frac{2\pi}{d}m (k+\alpha_a) \right) \ket{m}_A \label{eq:bestmeasA},\\
\ket{l}_{B,b} & = & \frac{1}{\sqrt{d}} \sum_{n=0}^{d-1} \exp\left(i\frac{2\pi}{d}n (-l+\beta_b) \right) \ket{n}_B \label{eq:bestmeasB},
\end{eqnarray}
with $\alpha_1=0$, $\alpha_2=1/2$, $\beta_1=1/4$ and $\beta_2=-1/4$. Further it was assumed, without loss of generality, that the optimal state is pure and hence the density matrix can be written in terms of the Schmidt decomposition of this state
\begin{equation}\label{eq.pure}
\rho=\ket{\psi}\bra{\psi},\quad\ket{\psi}=\sum_{k=0}^{d-1}\la_k\ket{kk},
\end{equation}
with $\la_k\ge0$ and the normalization condition $\sum_{k=0}^{d-1}\la_k^2=1$. 

The maximal violation of the generalized version of the CGLMP inequality, Eq.\ \eqref{eq:inequality}, was investigated in \cite{zohren}. In addition to giving further numerical evidence for the conjectured best measurements \eqref{eq:bestmeasA}--\eqref{eq:bestmeasB}, as well as for the assumption that the dimension of the Hilbert space can be taken equal to the number of outcomes, the numerical analysis was extended to very large numbers of outcomes of the order of one million. It was seen that for such large numbers of outcomes the left-hand-side of inequality \eqref{eq:inequality} tends slowly towards zero. From this there was conjectured a quantum Bell inequality for infinitely many outcomes which we prove in the following.
 
\begin{thm}[Quantum Bell inequality]
For the number of outcomes $d\!\to\!\infty$ the minimal value of $\prob{A_2<B_2}+\prob{B_2<A_1}+\prob{A_1<B_1}+\prob{B_1\leq A_2}$ converges to zero. Hence, 
\begin{eqnarray}
\!\!\!\Aci:=\prob{A_2<B_2}+\prob{B_2<A_1}+\nonumber\\
+\,\prob{A_1<B_1}+\prob{B_1\leq A_2}\geqslant 0\!\!\! \label{eq.thm}
\end{eqnarray}
is a \emph{tight} quantum Bell inequality for the $2\times2\times\infty$ Bell setting.
\end{thm}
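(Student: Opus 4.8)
The plan is to separate the statement into its two parts. The inequality $\Aci\ge 0$ is immediate: each of the four terms is a probability of the form $\prob{A_a<B_b}=\sum_{k<l}\mathrm{Tr}\left(A_a^k\otimes B_b^l\rho\right)$ with $A_a^k,B_b^l$ positive operators and $\rho$ a density matrix, hence each term is nonnegative, and so is their sum, regardless of the dimension of the Hilbert space or the operators chosen. All the work therefore lies in establishing \emph{tightness}, i.e.\ exhibiting quantum configurations for which $\Aci$ can be made arbitrarily small as $d\to\infty$.

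For tightness it suffices to evaluate $\Aci$ on one well-chosen family. First I would substitute the conjectured best projective measurements \eqref{eq:bestmeasA}--\eqref{eq:bestmeasB} together with a pure state \eqref{eq.pure} into the quantum probabilities \eqref{eq:QMprob}. A short computation gives the amplitude $\langle k|_{A,a}\langle l|_{B,b}\ket{\psi}=\frac{1}{d}\sum_{p}\la_p\,e^{-i\frac{2\pi}{d}p(k-l+\alpha_a+\beta_b)}$, so that $\prob{k,l|a,b}$ is the squared modulus of a discrete Fourier sum of the Schmidt coefficients and depends on $k,l$ only through the gap $l-k$ and on the settings only through the phase offset $\alpha_a+\beta_b$; the four terms of $\Aci$ carry the offsets $\tfrac14,-\tfrac14,\tfrac14,\tfrac34$.

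Next I would carry out the sums over the ranges $k<l$ (and $k>l$, $k\ge l$ for the terms written with Bob's variable on the left). Because each term is manifestly a nonnegative quadratic form in the Schmidt vector $\la=(\la_0,\dots,\la_{d-1})$, the whole expression becomes $\Aci=\langle\la|M_d|\la\rangle$ with $M_d$ real symmetric and positive-semidefinite. Minimizing over normalized $\la\ge 0$ then reduces tightness to the single claim that $\mu_d:=\min_{\|\la\|=1,\,\la\ge0}\langle\la|M_d|\la\rangle\to 0$ as $d\to\infty$.

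The main obstacle is this upper estimate $\mu_d\to 0$, and I would handle it by a Rayleigh-quotient/trial-state argument rather than by diagonalizing $M_d$ exactly. Taking $\la_k$ to be a slowly varying profile — the explicit approximate state advertised in the introduction — I would replace the discrete Fourier sums by their $d\to\infty$ continuum limits, turning the quadratic form into an integral functional of the limiting profile, and then show this functional can be driven to zero. This pins down the optimal, non-maximally-entangled profile and matches the slow numerical decay observed in \cite{zohren}; combined with the trivial lower bound $\Aci\ge0$ it yields $\mu_d\to0$ and hence tightness. The delicate points will be controlling the error in the discrete-to-continuum passage uniformly in $d$ and checking that the trial profile remains normalizable in the limit.
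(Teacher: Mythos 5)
Your skeleton matches the paper's: nonnegativity of $\Aci$ is immediate, and all the work is in tightness, which the paper also establishes by inserting the conjectured best measurements \eqref{eq:bestmeasA}--\eqref{eq:bestmeasB} and a pure Schmidt-decomposed state into \eqref{eq:QMprob}, obtaining the quadratic form $\Ac_d(\la)=2-\frac{1}{d}\sum_{k,l}\la_k\la_l/\cos\left(\tfrac{\pi}{2d}(k-l)\right)$, and passing to the continuum functional $\Aci(f)=2-\Mc(f)$. Up to that point your plan is sound.

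The genuine gap is in the final step, which is where essentially all of the mathematical content lives. You propose to ``take the limiting profile'' of the approximate state and show the functional is driven to zero, flagging normalizability as a point to check. But the limiting profile is $f(x)\propto\left(x(1-x)\right)^{-1/2}$, which is \emph{not} square-integrable: the supremum $\sup_f\Mc(f)=2$ is not attained by any admissible $f$, so there is no single trial function you can plug in. The paper instead uses a regularized one-parameter family $f_\delta(x)\propto\left(x(1-x)\right)^{\delta-1/2}$ with $\delta>0$, correctly normalized for each $\delta$, and must then prove the quantitative lower bound $\Mc(f_\delta)\to 2$ as $\delta\to 0$. This is done by restricting the double integral to the corner regions near $(0,1)$ and $(1,0)$ where the cosine in the denominator is small, bounding $\cos\left(\tfrac{\pi}{2}(x-y)\right)$ by a linear expression there, and evaluating the resulting integral in closed form in terms of digamma functions, whose $\delta\to0$ limit gives exactly $2$. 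Your proposal contains no candidate for this estimate, and without it there is no proof that the supremum equals $2$ rather than some smaller number; ``show this functional can be driven to zero'' is precisely the claim that needs proving. A secondary, more minor omission: you do not address the error control in the $\delta$-dependent normalization constant (the paper's $\Mc(f)=I_\delta+\mathcal{O}(\delta)$ step), which is needed so that the corner estimate for the unnormalized kernel actually transfers to $\Mc(f_\delta)$.
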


\begin{proof}
Since the the left-hand-side of \eqref{eq.thm} is obviously non-negative, one only has to show the tightness of the inequality. From the numerical analysis described above we expect this to be achieved by a quantum behavior with the conjectured best measurements \eqref{eq:bestmeasA}--\eqref{eq:bestmeasB} and a pure state \eqref{eq.pure} in the limit $d\to\infty$. By inserting \eqref{eq:QMprob}, \eqref{eq:bestmeasA}--\eqref{eq:bestmeasB} and \eqref{eq.pure} into \eqref{eq.thm} it can be shown that for finite number of outcomes the left hand-side of \eqref{eq.thm} reads \cite{zohren}
\begin{equation}
\Ac_d(\la)=2-\frac{1}{d}\sum_{k,l=0}^{d-1}\frac{\la_k\la_l}{\cos\left(\frac{\pi}{2d}(k-l)\right)}.
\end{equation}
Taking the limit $d\to\infty$ of this expression yields
\begin{equation}
\Aci (f)\!=\!\lim_{d\to\infty}\Ac_d(\la)\!=\!2-\Mc (f),
\end{equation}
with
\begin{equation}\label{eq:Mf}
\Mc (f):=\int_0^1\!\int_0^1\!\frac{f(x)f(y)}{\cos\left(\frac{\pi}{2}(x-y)\right)}\,dx\, dy,
\end{equation}
where the function $f(x)$ is non-negative and normalized according to $\int_0^1 f^2(x)dx=1$. We will now show that $\sup_f \Mc (f)=2$. 

From the insight of previous numerical investigations \cite{Chen,zohren,zohren2} we make the following ansatz
\begin{eqnarray}
f_\delta(x)&=&\frac{\pi^{1/4}\,2^{2\delta-1/2}\left( x(1-x) \right)^{\delta-1/2}}{\sqrt{\Gamma(1/2-2\delta)\Gamma(2\delta)\cos(2\pi \delta)}},\quad \delta > 0\nonumber\\
&=& \sqrt{\delta}\left( x(1-x) \right)^{\delta-1/2} +\mathcal{O}(\delta). \label{eq:statecont}
\end{eqnarray}
Inserting this into \eqref{eq:Mf} we get
\begin{equation}
\Mc (f)=I_\delta +\mathcal{O}(\delta),
\end{equation}
where we defined
\begin{equation}
I_\delta:=\int_0^1\!\int_0^1\!\frac{\left( x(1-x)y(1-y) \right)^{\delta-1/2}\delta}{\cos\left(\frac{\pi}{2}(x-y)\right)}dx dy.
\end{equation}
For every $0<\epsilon\leq 1/2$ and $0<\delta< 1/2$ one has
\begin{eqnarray}
I_\delta&\geq& 2 \int_0^\epsilon\!\int_{1-\epsilon}^1\!\frac{\left( x(1-x)y(1-y) \right)^{\delta-1/2}\delta}{\cos\left(\frac{\pi}{2}(x-y)\right)}dx dy\nonumber\\
&\geq& \frac{4}{\pi} \int_0^\epsilon\!\int_{0}^\epsilon\!\frac{( xy)^{\delta-1/2}\delta}{(x+y)}dx dy\nonumber\\
&=&\frac{\epsilon^{2\delta}}{\pi} \left\{\Psi(1/4-\delta/2)-\Psi (1/4+\delta/2)+ \right. \nonumber\\
&&+\,\Psi(3/4-\delta/2)-\Psi(3/4+\delta/2)+ \nonumber\\
&&+\left.2\pi\sec(\pi \delta)\right\},\label{proofend}
\end{eqnarray}
where $\Psi(z)=\Gamma'(z)/\Gamma(z)$ is the digamma function. For all $0<\epsilon\leq 1/2$ the last term in \eqref{proofend} goes to $2$ as $\delta\to 0$. Since also $\Mc (f) \leq 2$ by the non-negativity of the left-hand side of \eqref{eq.thm}, it follows from \eqref{proofend} that $\sup_f \Mc(f)=2$ and hence $\inf_f \Aci (f)=0$ which completes the proof. 
\end{proof}

\section{The approximate state}
The state \eqref{eq:statecont} causing the maximal violation of the Bell inequality \eqref{eq:inequality} in the limit $d\to\infty$ can be seen as a regularized version of the following approximate state for finite $d$
\begin{equation}\label{eq:approx}.
\quad\ket{\psi_d}\sim\sum_{k=0}^{d-1}\frac{\ket{kk}}{\sqrt{(k+1)(d-k)}},
\end{equation}
correctly normalized according to $||\psi_d||^2=1$. The quantum violation of the CGLMP inequality for this state was first investigated in \cite{Chen} and in \cite{zohren2} for the case of inequality \eqref{eq:inequality} for a large number of outcomes of order $10^6$. The minimal value of the left-hand-side of inequality \eqref{eq:inequality} as a function of the number of outcomes $d$ is shown in Fig.\ \ref{fig:min} for both the approximate state as well as the optimal state using the conjectured best measurement operators. 
\begin{figure}
\onefigure[width=3.4in]{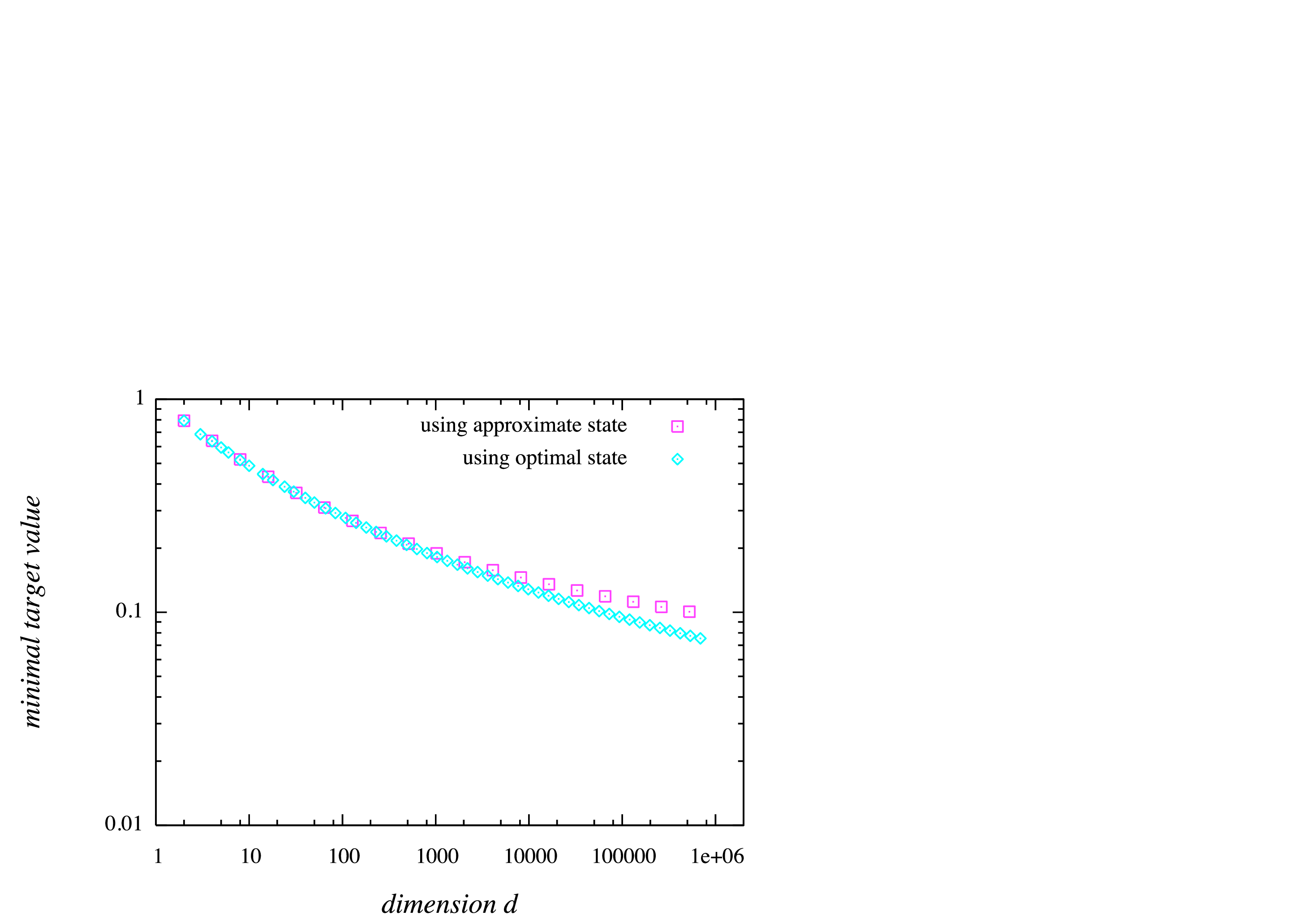}
\caption{Minimal value of the left-hand-side of inequality \eqref{eq:inequality}, corresponding to the maximal quantum violations of this inequality, as a function of the number of outcomes $d$ both for the optimal state as well as the approximate state \eqref{eq:approx}.}
\label{fig:min}
\end{figure}
In addition, Fig.\ \ref{fig:entropy} displays the entanglement entropy for both states as a function of the number of outcomes. One observes that in terms of maximal violation of the inequality the approximate state serves as a good approximation to the optimal state, even though the entanglement entropy is slightly different for large $d$. Nevertheless, the above results indicate that in the limit $d\to\infty$ both converge to the same state and hence should also have the same entanglement entropy. Using the definition of entanglement entropy of a pure state in terms of the Schmidt coefficients, i.e. $E(\psi)=-\sum_{i=0}^{d-1}\la_i^2\log\la_i^2$, we can calculate the entanglement entropy of the approximate state for $d\to\infty$ 
\begin{equation}
\lim_{d\to\infty}\frac{E(\psi_d)}{\log d}=\frac{1}{2}.
\end{equation}

\begin{figure}
\onefigure[width=3.4in]{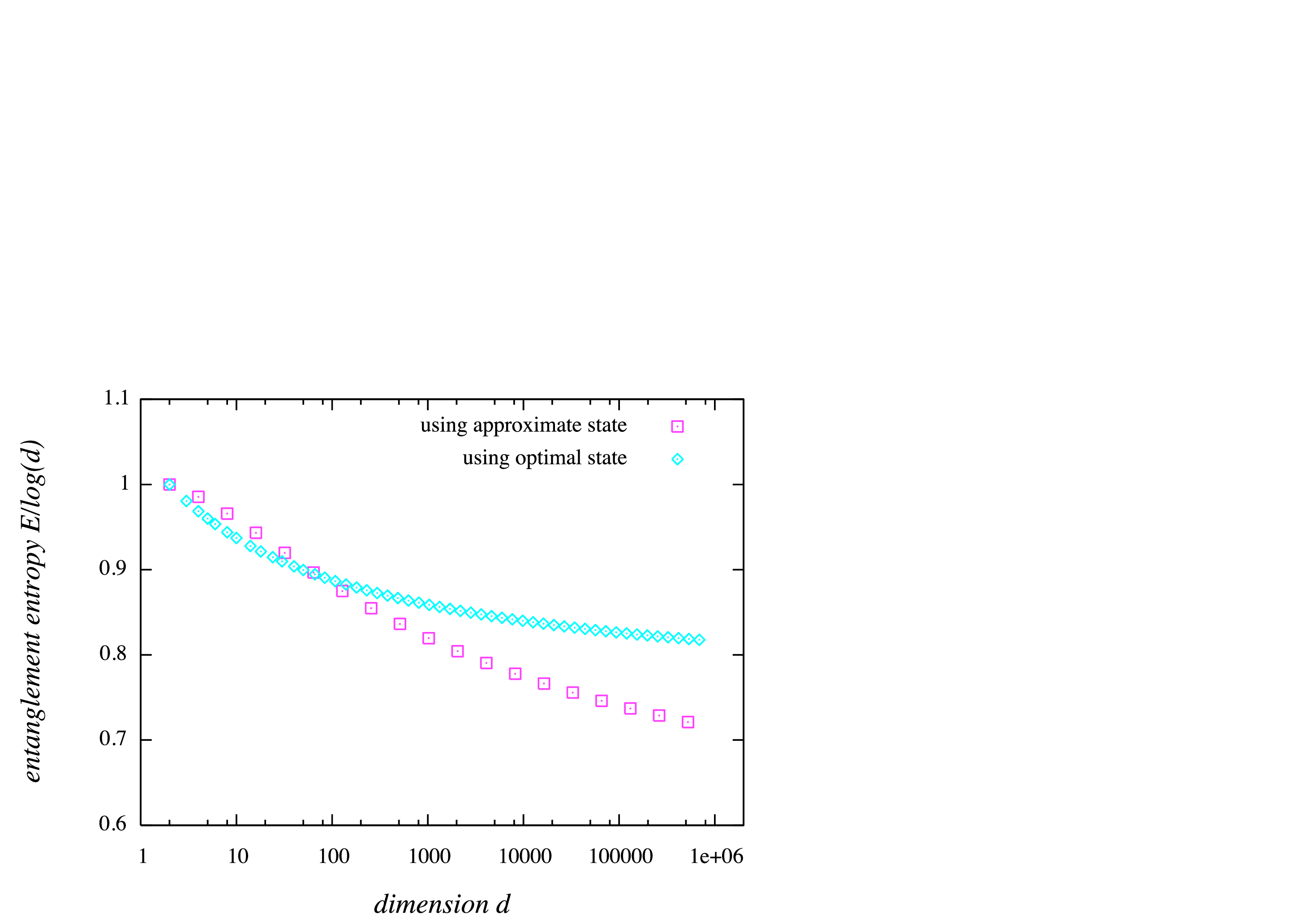}
\caption{Entanglement entropy of the optimal state and the approximate state as a function of the number of outcomes $d$.}
\label{fig:entropy}
\end{figure}

It is well known that for the number of outcomes $d\geq3$ the optimal state $\psi_d$ causing the maximal violation of the CGLMP inequality is not maximally entangled, i.e. $E(\psi_d)/\log d<1$ (see for example \cite{Ac02,optimal,zohren}). With respect to this it is interesting to see that in the case of $d\to\infty$ even an entanglement entropy  of $1/2$ is sufficient to cause the maximal violation. In addition, this result points further to the differences between maximal violation of Bell inequalities and the statistical strength of Bell experiments defined through Kullback-Leibler divergence or relative entropy \cite{statistical}. The latter was investigated in \cite{optimal} for the CGLMP inequality, where the entanglement entropy of the optimal state $\Psi_d$, which minimizes the Kullback-Leibler divergence, approached the asymptotic value $\lim_{d\to\infty}E(\Psi_d)/\log d\approx 0.69$. 

\section{Discussion}
We have presented a proof of a tight quantum Bell inequality \revision{or Tsirelson type inequality} for the $2\times2\times\infty$ Bell setting, i.e.\ for two parties and two measurements for each side which can each have infinitely many outcomes. This quantum Bell inequality is a direct analog of Tsirelson's \revision{original} inequality \cite{Cir80} for the case of infinitely many outcomes. This analogy becomes much more obvious when writing Tsirelson's \revision{original} inequality in the form of our new tight quantum Bell inequality
\begin{eqnarray}
\!\!\!\prob{A_2<B_2}+\prob{B_2<A_1}+\prob{A_1<B_1}+\nonumber\\
+\,\prob{B_1\leq A_2}\geqslant I_Q^d.\!\!\!\label{eq:disscusion}
\end{eqnarray}
Tsirelson's inequality takes the form of \eqref{eq:disscusion} for two outcomes with $I_Q^{d=2}\!=\!(3-\sqrt{2})/2\approx 0.79$. Our inequality corresponds to the case of infinitely many outcomes with $I_Q^{d=\infty}\!=\! 0$. As in the case of Tsirelson's inequality our proof does not require any assumptions on the Hilbert space or the kind of operators involved. However, we show that the maximal violation can be achieved by a pure state whose explicit form is presented above and by the conjectured best measurement operators.

It is interesting to notice that the presented quantum Bell inequality \revision{is a tight no-signalling inequality and} is maximal in the sense that it corresponds to a face of the polytope of normalized probability vectors. A similar situation was observed in \cite{stefano} for the case of a $2\times N \times d$ Bell setting in the limit $N\to \infty$. It was seen there that this property of the quantum Bell inequality is particularly interesting in the context of quantum key distribution. While in \cite{stefano} the optimal state when $N\to \infty$ is the maximal entangled state, in our case the situation is more complicated. We leave the analysis of the relevance of the here presented quantum Bell inequality for quantum key distribution for future work. Further, one can also generalize the above quantum Bell inequality for the case of a $2\times N \times d$ Bell setting in the limit $d\to \infty$ for any $N\geq2$. Again the situation becomes slightly more complicated as the optimal state in not maximally entagled. The details of this generalization will be presented elsewhere.

Let us finally also mention that the presented quantum Bell inequality might be potentially relevant for experimental implementations through high-dimensional entangled photon pairs. The violation of Bell inequalities with spatial entanglement has recently been introduced in context of the CHSH inequality \cite{violation} (see also \cite{werner2009} for an interesting more recent account). There is hope that with respect to the inequality presented above an experimental  implementation through high-dimensional orbital-angular-momenum entanglement \cite{spatial} might be feasible.

\acknowledgments
P.R.\ and W.W.\ acknowledge support of the Marie Curie Training Network ENRAGE, MRTN-CT-2004-005616. S.Z. would like to thank Wouter Peeters and Bart-Jan Pors from the Leiden Institute of Quantum Optics and Quantum Information for discussions, the Department of Statistics at S\~ao Paulo University for kind hospitality and the ISAC program (Erasmus Mundus) for financial support. \revision{Further, we would like to thank the referees for their comments and suggestions to improve the presentation.}


\end{document}